\newcommand{\vs}{2.5pt}  
\newcommand{\et}{1pt}  
\def\qed{\hskip 3pt \hbox{\vrule width4pt depth2pt height6pt}}
\newtheorem{Lemma}{Lemma}
\newtheorem{Theorem}[Lemma]{Theorem}
\newtheorem{Proposition}[Lemma]{Proposition}
\newtheorem{Corollary}[Lemma]{Corollary}
\newcommand{\diam}{\mathop{\mathrm{diam}}\nolimits}
\newcommand{\dist}{\mathop{\mathrm{dist}}\nolimits}
\newcommand{\Fix}{\mathop{\mathrm{Fix}}\nolimits}
\newcommand{\Cay}{\mathop{\mathrm{Cay}}\nolimits}
\begin{document}
%
\title{An Efficient Algorithm for the Diameter of Cayley Graphs Generated by Transposition Trees}

%
%
%

\author{Ashwin~Ganesan
\thanks{Manuscript received June 11, 2012; revised October 4, 2012.

The author is with the Department of Mathematics, Amrita School of Engineering, Amrita University, Coimbatore 641112, Tamil Nadu, India.

Email: ashwin.ganesan@gmail.com


}
}

\maketitle

\pagestyle{empty}
\thispagestyle{empty}

\begin{abstract}
A problem of practical and theoretical interest is to determine or estimate the diameter of various families of Cayley networks. The previously known estimate for the diameter of Cayley graphs generated by transposition trees is an upper bound given in the oft-cited paper of Akers and Krishnamurthy (1989).  In this work, we first assess the performance of their upper bound.  We show that for every $n$, there exists a tree on $n$ vertices, such that the difference between the upper bound and the true diameter value is at least $n-4$.

Evaluating their upper bound takes time $\Omega(n!)$.  In this paper, we provide an algorithm that obtains an estimate of the diameter, but which requires only time $O(n^2)$; furthermore, the value obtained by our algorithm is less than or equal to the previously known diameter upper bound.  Such an improvement to polynomial time, while still performing at least as well as the previous bound, is possible because our algorithm works directly with the transposition tree on $n$ vertices and does not require examining any of the permutations.  We also provide a tree for which the value computed by our algorithm is not necessarily unique, which is an important result because such examples are quite rare. For all families of trees we have investigated so far, each of the possible values computed by our algorithm happens to also be an upper bound on the diameter.

\end{abstract}

\begin{IAENGkeywords}
Cayley graphs; transposition trees; diameter; algorithms;
interconnection networks; permutations.
\end{IAENGkeywords}

%
\IAENGpeerreviewmaketitle

\section{Introduction}

A problem of practical and theoretical interest is to determine or estimate the diameter of various families of Cayley networks. In the field of interconnection networks, Cayley graphs generated by transposition trees were studied in Akers and Krishnamurthy \cite{Akers:Krishnamurthy:1989}, where it was shown that the diameter of some families of Cayley graphs is sublogarithmic in the number of vertices.  This is one of the main reasons such Cayley graphs were considered a superior alternative to hypercubes for consideration as the topology of interconnection networks.  Since then, much work has been done in this area; for further details, we refer the reader to   \cite{Lakshmivarahan:etal:1993}, \cite{Heydemann:1997}, \cite{Du:Hsu:1995}, \cite{Xu:2001}.

Let $G$ be a group generated by a set of elements $S$.  The {\em Cayley graph} (or {\em Cayley diagram}) of $G$ with respect to the set of generators $S$, denoted by $\Cay(G,S)$, is a directed graph with vertex set $G$ and with an arc from vertex $g$ to vertex $gs$ iff $g \in G$ and $s \in S$ (cf. \cite{Bollobas:1998}, \cite{Biggs:1993}).  When $S$ is closed under inverses, there is an arc from $g$ to $h$ if and only if there is an arc from $h$ to $g$, and so $\Cay(G,S)$ can be considered to be an undirected graph. When the identity group element $I$ is not in $S$, the Cayley graph also has no self loops.

Cayley graphs possess a certain amount of symmetry. In a symmetric network, the topology of the network looks the same from every node.  It is now known that many families of symmetric networks possess additional desirable properties such as optimal fault-tolerance \cite{Alspach:1992}, \cite{Akers:Krishnamurthy:1987}, algorithmic efficiency \cite{Annexstein:etal:1990}, optimal gossiping protocols \cite{Bermond:etal:1996} \cite{Zhou:2009}, and optimal routing algorithms \cite{Chen:etal:2006}, among others, and so have been widely studied in the fields of interconnection networks.  New topologies continue to be proposed and assessed \cite{Parhami:2005}.  Cayley graphs, permutation groups, and distance-related questions continue to be an active area of research, and have also found applications in computational biology \cite{Bafna:Pevzner:1998} and wireless sensor networks \cite{Wang:Tang:2007}.

The diameter of a network represents the maximum communication delay between two nodes in the network. The design and performance of algorithms or bounds that determine or estimate the diameter of various families of Cayley graphs of permutation groups is of much theoretical and practical interest. The problem of determining the diameter of a Cayley network is the same as that of determining the diameter of the corresponding group for a given set of generators; the latter quantity is defined to be the minimum length of an expression for a group element in terms of the generators, maximized over all group elements. This diameter problem is difficult even for the simple case when the symmetric group is generated by cyclically adjacent transpositions \cite{Jerrum:1985}.  The pancake flipping problem, which corresponds to determining the diameter of a particular permutation group, was studied in \cite{Gates:Papadimitriou:1979}, and while some bounds were given there and an improvement was made recently  \cite{Chitturi:etal:2009}, this problem remains open as well.

Throughout this paper, set of generates $S$ will be a set of transpositions of $\{1,2,\ldots,n\}$.  Given $S$, the {\em transposition graph} $T(S)$ is defined to be the simple, undirected graph whose vertex set is $\{1,2,\ldots,n\}$ and with vertices $i$ and $j$ being adjacent in $T(S)$ whenever $(i,j) \in S$.   Cayley graphs of permutation groups generated by transposition sets include many topologies of interconnection networks that have been well studied; some special cases include the family of star graphs, bubble-sort graphs, modified bubble-sort graphs, and hypercubes, among others \cite{Heydemann:1997}.

Algorithms for the diameter of Cayley graphs have been widely studied.  When the generator set is fixed in advance, there do exist polynomial time algorithms for the problem of expressing a group element as a product of minimum length in terms of the generators and other distance-related problems (cf. \cite{Jerrum:1985}, \cite{Cooperman:Finkelstein:1992}).  Since a Cayley graph is vertex-transitive, these results provide a polynomial time algorithm for the distance between any two vertices $\pi$ and $\tau$ in the Cayley graph, since their distance is the minimum length generator sequence for $\pi^{-1} \tau$.  However, these results only provide polynomial time algorithms for the distance between one given pair of vertices in the Cayley graph and not for the diameter of the entire Cayley graph, which is the maximum value of the distances between all pairs of vertices of the graph.  Our focus in this paper is on the diameter problem.

A given set of transpositions $S$ of $\{1,2,\ldots,n\}$ generates the entire symmetric group $S_n$ iff the transposition graph $T(S)$ is connected \cite{Godsil:Royle:2001}.  A transposition graph which is a tree is called a transposition tree.   Henceforth, $S$ is a set of transpositions such that the transposition graph $T(S)$ is a tree.  We often use the same symbol $T=T(S)$ to represent both the graph of the tree as well as a set of transpositions $S$, and the notation $(i,j)$ is used to represents both an edge of $T$ as well as the corresponding transposition in $S$.  Since each transposition is its own inverse, the Cayley graph $\Gamma:=\Cay(S_n,S)$ is a simple, undirected graph.  Let $\dist_{\Gamma}(u,v)$ denote the distance between vertices $u$ and $v$ in an undirected graph $\Gamma$, and let $\diam(\Gamma)$ denote the diameter of $\Gamma$.  Note that $\dist_{\Gamma}(\pi,\sigma) = \dist_{\Gamma}(I,\pi^{-1}\sigma)$, where $I$ denotes the identity permutation.  Thus, the diameter of $\Gamma$ is the maximum of $\dist_{\Gamma}(I,\pi)$ over $\pi \in S_n$.

In \cite{Akers:Krishnamurthy:1989} (cf. also \cite[p. 188]{Hahn:Sabidussi:1997}), it is shown that the diameter of $\Gamma:=\Cay(S_n,S)$ is bounded as
$$\diam(\Gamma) \le \max_{\pi \in S_n} \left\{ c(\pi)-n+\sum_{i=1}^n \dist_T(i,\pi(i)) \right\},$$
where the maximum is over all permutations in $S_n$, $c(\pi)$ denotes the number of cycles in the disjoint cycle representation of $\pi$, and $\dist_T$ is the distance function on pairs of vertices of the tree.

Observe that evaluating this upper bound requires $\Omega(n!)$ computations since the quantity in braces above needs to be evaluated for each permutation in $S_n$.  When a bound or algorithm is proposed in the literature, it is often of interest to determine how far away the bound can be from the true value in the worst case, and to obtain more efficient algorithms for estimating the parameters.  In this paper, we assess the performance of the previously known upper bound on the diameter, propose a new $O(n^2)$ algorithm to estimate the diameter of Cayley graphs for any given transposition tree, and we investigate the properties and performance of our algorithm.

\section{Preliminaries, and summary of our main results}

We now recall the previous relevant results and some terminology from the literature as well as summarize our contributions to this problem.

Let $S_n$ denote the set of all permutations of $\{1,2,\ldots,n\}$.  We represent a permutation $\pi \in S_n$ as a linear arrangement, as in $[\pi(1),\pi(2),\ldots,\pi(n)]$, or in cycle notation.  $c(\pi)$ denotes the number of cycles in $\pi$, including cycles of length 1.  Thus, if $\pi = [3,5,1,4,2] = (1,3)(2,5) \in S_5$, then $c(\pi)=3$.  For $\pi,\tau \in S_n$, $\pi \tau$ is the permutation obtained by applying $\tau$ first and then $\pi$.  If $\pi \in S_n$ and $\tau = (i,j)$ is a transposition, then $c(\tau \pi) = c(\pi)+1$ if $i$ and $j$ are in the same cycle of $\pi$, and $c(\tau \pi) = c(\pi)-1$ if $i$ and $j$ are in different cycles of $\pi$ (cf. \cite{Biggs:2003}).  $\Fix(\pi)$ denotes the set of fixed points of $\pi$. We assume throughout that the transposition tree has at least 5 vertices since the problem is easily solved for all smaller trees by using brute force.

Throughout this work, $\Gamma$ denotes the Cayley graph $\Cay(S_n,S)$ generated by a transposition tree $T=T(S)$.  The previous bound on the diameter is as follows:

\begin{Theorem} \label{thm:AK:dist:ubound}  \cite{Akers:Krishnamurthy:1989}  Let $\Gamma:=\Cay(S_n,S)$ be the Cayley graph generated by a transposition tree $T(S)$.  Then, for any $\pi \in S_n$,
$$\dist_{\Gamma}(I,\pi) \le c(\pi)-n+\sum_{i=1}^n \dist_T(i,\pi(i)),$$
where $c(\pi)$ is the number of cycles (including fixed points) in the disjoint cycle representation of $\pi$.
\end{Theorem}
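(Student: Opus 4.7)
The plan is to prove the bound by strong induction on the quantity
\[
\Phi(\pi) := c(\pi) - n + \sum_{i=1}^n \dist_T(i,\pi(i)),
\]
which is the right-hand side of the claimed inequality. As a preliminary, one checks that $\Phi(\pi) \ge 0$, with equality iff $\pi = I$: each non-trivial cycle of $\pi$ of length $k$ contributes $-(k-1)$ to $c(\pi)-n$ but at least $2(k-1)$ to the distance sum, since the closed walk $a_1 \to a_2 \to \cdots \to a_k \to a_1$ through the cycle must traverse every edge of its Steiner subtree in $T$ at least twice. The base case $\pi = I$ is thus trivial.

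For the inductive step with $\pi \ne I$, the aim is to exhibit a generator $\tau \in S$ with $\Phi(\tau\pi) \le \Phi(\pi) - 1$; the inductive hypothesis then yields
\[
\dist_\Gamma(I,\pi) \le \dist_\Gamma(I,\tau\pi) + 1 \le \Phi(\tau\pi) + 1 \le \Phi(\pi),
\]
which is the desired inequality. To construct $\tau$, I would pick a non-fixed element $v$ of $\pi$, set $u := \pi(v) \ne v$, and let $w$ be the neighbor of $u$ in $T$ on the unique $u$-to-$v$ path, so that $\tau := (u,w) \in S$. Left-multiplication by $\tau$ modifies $\pi$ at exactly the two positions $v = \pi^{-1}(u)$ and $v' := \pi^{-1}(w)$. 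At position $v$ the summand $\dist_T(\cdot,\pi(\cdot))$ drops by exactly $1$, because $w$ lies one edge closer to $v$ than $u$ does; at position $v'$ it changes by $-1$ if $v'$ lies on the $u$-side of the edge $uw$ in $T$, and by $+1$ otherwise.

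The main obstacle is to reconcile this distance bookkeeping with the $\pm 1$ change in the cycle count $c(\cdot)$, which is $+1$ when $u$ and $w$ lie in a common cycle of $\pi$ and $-1$ otherwise. A four-way case analysis shows that $\Phi(\tau\pi) \le \Phi(\pi)-1$ holds outright in three of the four cases; the single troublesome case is when $u,w$ lie in a common cycle of $\pi$ while $v'$ sits on the $w$-side of the edge $uw$, for there the two $+1$ contributions combine to give $\Phi(\tau\pi) = \Phi(\pi)+1$. Resolving this residual case is the technical crux, and requires a more refined choice of $v$. One natural tactic is to take $v$ to be a Steiner-leaf of the cycle $C$ of $\pi$ through $v$, or, upon failure, to switch starting leaves or to pass to $\pi^{-1}$ and re-run the argument, exploiting the symmetry $\dist_\Gamma(I,\pi) = \dist_\Gamma(I,\pi^{-1})$; one must then show that at least one non-trivial cycle of $\pi$ admits a starting vertex $v$ avoiding the bad subcase. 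Iterating the reduction finally yields an explicit factorization of $\pi$ into at most $\Phi(\pi)$ generators from $S$, giving the claimed bound on $\dist_\Gamma(I,\pi)$.
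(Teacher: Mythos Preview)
Your inductive framework is exactly that of the paper's sketch: define $\Phi(\pi)=f_T(\pi)$, check $\Phi(I)=0$, and for $\pi\neq I$ exhibit a generator $\tau\in S$ with $\Phi(\tau\pi)\le\Phi(\pi)-1$. The paper (following Akers--Krishnamurthy) carries this out with \emph{right} multiplication $\pi\mapsto\pi\tau$ and the ``marker'' interpretation, whereas you use \emph{left} multiplication $\pi\mapsto\tau\pi$; the two are exchanged by replacing $\pi$ with $\pi^{-1}$, so the approaches are formally equivalent. Your base-case argument (closed walk through a cycle covers each Steiner edge twice) is fine.

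The genuine gap is precisely where you say ``one must then show'': you never establish that the bad subcase can be avoided. Your proposed tactics---choosing $v$ a Steiner leaf of its cycle, switching leaves, or passing to $\pi^{-1}$---are not shown to succeed, and in fact choosing a Steiner leaf can fail. For instance, on the path $1\!-\!2\!-\!3\!-\!4\!-\!5$ with $\pi=(1,5,3,4,2)$, taking the Steiner leaf $v=1$ gives $u=5$, $w=4$, $v'=3$ on the $w$-side with $u,w$ in the same cycle: the bad case. So the leaf heuristic alone is insufficient, and you have no argument ruling out that \emph{every} choice of $v$ (and every choice for $\pi^{-1}$) is bad simultaneously.

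The paper's resolution is different in kind: rather than hunting for a good starting vertex, one starts at any non-fixed position and follows a \emph{walk} in $T$. In the right-multiplication picture: pick $i_0$ with $\pi(i_0)\neq i_0$; let $i_1$ be the neighbour of $i_0$ toward $\pi(i_0)$; if the marker at $i_1$ is homed, $(i_0,i_1)$ is a type~B edge; if it wants to cross back toward $i_0$, $(i_0,i_1)$ is type~A; otherwise let $i_2$ be the neighbour of $i_1$ toward $\pi(i_1)$ and iterate. The walk never backtracks, lives in a finite tree, and must terminate at a leaf or at a reversal---either way yielding an admissible edge. Translated to your left-multiplication setup (i.e.\ applied to $\pi^{-1}$), the point is that when your first edge $(u,w)$ lands in the bad case you should not restart at a new $v$ but rather \emph{continue}: replace $u$ by $w$, take the neighbour of $w$ toward $\pi^{-1}(w)$, and repeat. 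That missing ``continue the walk'' step is the idea your proof needs.
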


Since $\Gamma$ is vertex-transitive and $\dist_{\Gamma}(\pi,\tau) = \dist_{\Gamma}(I, \pi^{-1} \tau)$, by taking the maximum over both sides of the above inequality, we obtain:
\begin{Corollary} \label{cor:diam:ubound} \cite[p.188]{Hahn:Sabidussi:1997}
$$\diam(\Gamma) \le \max_{\pi \in S_n} \left\{ c(\pi)-n+\sum_{i=1}^n \dist_T(i,\pi(i)) \right\} =: f(T).$$
\end{Corollary}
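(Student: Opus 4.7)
The plan is to derive the corollary directly from Theorem~\ref{thm:AK:dist:ubound} by exploiting the vertex-transitivity of the Cayley graph. First I would recall that the diameter is by definition
\[
\diam(\Gamma) \;=\; \max_{\pi, \tau \in S_n} \dist_{\Gamma}(\pi, \tau),
\]
so the goal is to uniformly bound all pairwise distances.

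Next, I would invoke the standard identity $\dist_{\Gamma}(\pi, \tau) = \dist_{\Gamma}(I, \pi^{-1}\tau)$, which was already noted in the excerpt and which holds because left multiplication by $\pi^{-1}$ is a graph automorphism of $\Cay(S_n, S)$ (each edge $\{g, gs\}$ maps to $\{\pi^{-1} g, \pi^{-1} g s\}$). As $\tau$ ranges over $S_n$ with $\pi$ fixed, the element $\pi^{-1}\tau$ also ranges over all of $S_n$; therefore
\[
\diam(\Gamma) \;=\; \max_{\sigma \in S_n} \dist_{\Gamma}(I, \sigma).
\]

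Finally, I would apply Theorem~\ref{thm:AK:dist:ubound} to each $\sigma \in S_n$ to obtain
\[
\dist_{\Gamma}(I, \sigma) \;\le\; c(\sigma) - n + \sum_{i=1}^{n} \dist_T(i, \sigma(i)),
\]
and taking the maximum of both sides over $\sigma \in S_n$ yields exactly $f(T)$, proving the claim.

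There is no real obstacle here: the content is entirely in Theorem~\ref{thm:AK:dist:ubound}, and the corollary is a routine reduction from pairwise distances to distances from the identity via vertex-transitivity, followed by taking a maximum. The only thing worth being explicit about is why the substitution $\sigma := \pi^{-1}\tau$ preserves the maximum, namely the bijectivity of left multiplication in $S_n$.
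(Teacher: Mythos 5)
Your proposal is correct and follows essentially the same route as the paper, which obtains the corollary from Theorem~\ref{thm:AK:dist:ubound} by noting that $\dist_{\Gamma}(\pi,\tau)=\dist_{\Gamma}(I,\pi^{-1}\tau)$ and taking the maximum over all permutations. Your explicit justification of the substitution via bijectivity of left multiplication is a minor elaboration of the same argument, not a different approach.
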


In the sequel we shall often refer to $f(T)$ as the the previously known upper bound on the diameter of the Cayley graph or {\em the diameter upper bound}. Note that evaluating this estimate $f(T)$ requires time $\Omega(n!)$ since the quantity in braces needs to be evaluated for each of the $n!$ vertices of the Cayley graph.  While one can investigate methods to optimize such an algorithm, at present there is no known polynomial time algorithm for computing $f(T)$.  It is not known whether an algorithm for computing $f(T)$ can be optimized to run in polynomial time or time better than $\Omega(n!)$.  Our objective is to take a different approach and propose a new algorithm that works directly with the transposition tree on $n$ vertices rather than the Cayley graph on $n!$ vertices.

We now recall from \cite{Akers:Krishnamurthy:1989} a proof sketch of these results since we refer to this terminology in the sequel.  Suppose we are given a transposition tree $T$ on vertex set $\{1,2,\ldots,n\}$ and a permutation $\pi \in S_n$ for which we wish to determine $\dist_{\Gamma}(I,\pi)$.  At each vertex $i$ of the tree, we place a marker labeled $\pi(i)$.  Thus, the permutation $\pi$ represents the current position of the markers $1,2,\ldots,n$ on the tree.  To {\em apply an edge} $(i,j)$ of the tree to the current position of markers is to say that we switch the markers at the endpoints of the edge $(i,j)$.  Note that the permutation corresponding to the new position of the markers is exactly $\pi (i,j)$ (here, we read products or compositions of permutations from right to left). The problem of determining $\dist_{\Gamma}(I,\pi)$ is thus equivalent to that of determining the minimum number of edges necessary to `home' each marker $i$ to vertex $i$ of the tree,  and the Cayley graph $\Gamma$ represents the state transition diagram of the current position of markers.  This problem of placing markers $\pi(i)$ at vertex $i$ of an $n$-vertex graph and homing each marker to its vertex is sometimes also referred to as `sorting' a permutation using only the transpositions defined by $T$.  Such problems also arise in routing problems in Cayley networks; for example, a node $\sigma$ receiving a message destined to node $\tau$ of $\Gamma$, or equivalently, a node $\pi = \tau^{-1} \sigma$ receiving a message destined to node $I$, needs to figure out which of its neighbors in $\Gamma$ is closest to the destination node, and this amounts to determining which edge of the tree is optimal in terms of the objective of sorting the current permutation of markers using the minimum number of edges.  Further details can be found in \cite{Akers:Krishnamurthy:1989}.

Let $T$ be a tree on vertex set $\{1,\ldots,n\}$, and let $\pi \ne I$ be the current position of markers on the tree.  It can be shown that $T$ always has an edge $ij$ such that the edge satisfies one of the following two conditions: Either (A) the marker at $i$ and the marker at $j$ will both reduce their distance to $\pi(i)$ and $\pi(j)$, respectively, if the edge $(i,j)$ is applied, or (B) the marker at one of $i$ or $j$ is already homed, and the other marker wishes to apply the edge $(i,j)$.
Let $f_T(\pi)$ denote the upper bound quantity in the right hand side of the inequality in Theorem~\ref{thm:AK:dist:ubound} .  Thus, $f(T) = \max_{\pi \in S_n} f_T(\pi)$.  It can be shown that during each step that a transposition corresponding to an edge of type A or type B is applied to $\pi$, we get a new position of markers  $\pi'$ which has a strictly smaller value of $f_T$; i.e., $f_T(\pi') < f_T(\pi)$, and it can be verified that $f_T(I)=0$.  This proves the bounds above.

We point out that this same diameter upper bound inequality of Corollary~\ref{cor:diam:ubound} is also derived in \cite{Vaughan:1991}; however, this paper was published in 1991, whereas Akers and Krishnamurthy \cite{Akers:Krishnamurthy:1989} was published in 1989 and widely picked up on in the interconnection networks community by then.  There are some subsequent papers, such as \cite{Vaughan:Portier:1995} and \cite{Smith:1999}, which cite only \cite{Vaughan:1991} and not \cite{Akers:Krishnamurthy:1989}.

Note that the distance and diameter bounds above need not hold if $T$ has cycles (the proof recalled above breaks down because if $T$ has cycles, there exists a $\pi \ne I$ such that $T$ has no admissible edges for this $\pi$).  Thus, when we study the strictness of the diameter upper bound, we assume that $T$ is a tree and $\Gamma$ is the Cayley graph generated by this tree.

The exact diameter value of Cayley graphs generated by transposition trees is known in only some special cases.  For example, if the transposition tree is a path graph on $n$ vertices, the corresponding Cayley graph is called a bubble-sort graph.  It is well known that the diameter of this Cayley graph is equal to the maximum number of inversions of a permutation, which is ${n \choose 2}$ (cf. \cite{Akers:Krishnamurthy:1989} \cite{Berge:1971}).  When the transposition tree is a star $K_{1,n-1}$, the Cayley graph is called a star graph, and it has diameter equal to $\lfloor 3(n-1)/2 \rfloor$ (cf. \cite{Akers:Krishnamurthy:1989}).  For the general case of arbitrary trees, only bounds such as Corollary~\ref{cor:diam:ubound} are known.

It is possible to obtain a heuristic derivation of the diameter upper bound formula $f(T)$, as follows.  It is straightforward to derive the distance upper bound for the special case when the transposition tree is a star $K_{1,n-1}$, and we get \cite{Akers:Krishnamurthy:1989}
$$\dist_{\Gamma}(I,\pi) \le n+c(\pi)-2|\Fix(\pi)| - r(\pi).$$
Observe that $|\Fix(\pi)| = n-|\overline{\Fix(\pi)|}$, which yields
$$\dist_{\Gamma}(I,\pi) \le c(\pi) - n + 2|\overline{\Fix(\pi)}| - r(\pi).$$
Note that when the tree is a star, $2 |\overline{\Fix(\pi)}|$ is almost (i.e. within 1 of) the sum of distances $\sum_{i=1}^n \dist_T(i,\pi(i))$.  This leads us to the question of whether the inequality
$$\dist_{\Gamma}(I,\pi) \le c(\pi)-n+\sum_{i=1}^n \dist_T(i,\pi(i))$$
also holds for all the remaining trees $T$, and this question has been answered affirmatively by Theorem~\ref{thm:AK:dist:ubound}.

We shall later use the following result on the sharpness of the diameter upper bound inequality:

\bigskip
\begin{Theorem} \cite{Ganesan:JCMCC} \label{Thm:pathfTmax:equals:diam} Let $\Gamma$ denote the Cayley graph generated by a transposition tree $T$. Then the diameter upper bound inequality
$$\diam(\Gamma) \le f(T)$$
holds with equality if $T$ is a path.
\end{Theorem}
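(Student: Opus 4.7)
The plan is to prove the claimed equality by establishing both directions of inequality. The direction $\diam(\Gamma)\le f(T)$ is a direct instance of Corollary~\ref{cor:diam:ubound}, so the real content is the reverse inequality $f(T)\le\diam(\Gamma)$. When $T$ is the path $1$--$2$--$\cdots$--$n$, $\Gamma$ is the bubble-sort graph, and it was already noted in the paper that $\diam(\Gamma)=\binom{n}{2}$: the distance from $I$ to $\pi$ in $\Gamma$ equals $\inv(\pi)$, which is maximized at $\binom{n}{2}$ by the reverse permutation. Thus it suffices to prove the pointwise bound $f_T(\pi)\le\binom{n}{2}$ for every $\pi\in S_n$, which will give $f(T)=\max_\pi f_T(\pi)\le\binom{n}{2}=\diam(\Gamma)$.

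Since $\dist_T(i,j)=|i-j|$ on a path, we have $f_T(\pi)=c(\pi)-n+\sum_{i=1}^n|i-\pi(i)|$, and I would attack the pointwise bound cycle-by-cycle. A $k$-cycle $C$ with support $\{s_1<s_2<\cdots<s_k\}$ contributes $1-k+D_C$ to $f_T(\pi)$, where $D_C$ is the length of the associated Hamiltonian tour $s_{\sigma(1)}\to\cdots\to s_{\sigma(k)}\to s_{\sigma(1)}$ on the integer line. A standard peak/valley argument for such tours gives the sharp bound $D_C\le 2(U-L)$, where $U$ and $L$ are the sums of the top $\lfloor k/2\rfloor$ and bottom $\lfloor k/2\rfloor$ elements of the support (with the middle element of an odd-length cycle cancelling). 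The central lemma is then a cycle-splitting inequality: for every $k\ge 3$, replacing $C$ by the involution that pairs $s_i$ with $s_{k+1-i}$ (and fixes $s_{(k+1)/2}$ when $k$ is odd) strictly increases the total $f_T$-contribution, the gain being $k/2-1$ or $(k-1)/2$ according to the parity of $k$.

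Iterating the splitting lemma on every cycle of length $\ge 3$ reduces the maximization to the case where $\pi$ is an involution. For an involution with $m$ transpositions, $f_T(\pi)=2\sum_{(a,b)\in\pi}(b-a)-m$. Optimizing jointly over the choice of support and the choice of matching shows the maximum is attained by the \emph{outer} choice---pair $i$ with $n+1-i$ for $i=1,\ldots,m$ and fix the middle elements---yielding $f_T(\pi)=m(2n-2m-1)$. Maximizing this concave quadratic over integers $m\in\{0,1,\ldots,\lfloor n/2\rfloor\}$ gives the value $\binom{n}{2}$ at $m=\lfloor n/2\rfloor$, witnessed by the reverse permutation itself. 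The main obstacle will be verifying the cycle-splitting lemma together with the peak/valley bound $D_C\le 2(U-L)$ in a parity-robust way; once those elementary but case-heavy calculations are in hand, the global maximization collapses to a one-variable integer optimization and the theorem follows.
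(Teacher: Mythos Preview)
The paper does not supply its own proof of this theorem; it is quoted from \cite{Ganesan:JCMCC} and used as a black box in the proof of Theorem~\ref{thm:strictness}. So there is no in-paper argument to compare against, and I can only assess your proposal on its own merits.

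Your plan is sound. The only substantive task is indeed to show $f_T(\pi)\le\binom{n}{2}$ for every $\pi$ when $T$ is the path, and your cycle decomposition $f_T(\pi)=\sum_{C}\bigl(1-|C|+D_C\bigr)$ is the right starting point. The peak/valley bound $D_C\le 2(U-L)$ is correct and admits a clean proof: in any cyclic tour on the line each vertex is a peak, a valley, or a slope; peaks and valleys alternate (hence are equinumerous, say $p$ of each); and $D_C=2\bigl(\sum_{\text{peaks}}s-\sum_{\text{valleys}}s\bigr)$, which with $p\le\lfloor k/2\rfloor$ yields $D_C\le 2(U-L)$. The reduction to involutions then goes through, and your final one-variable optimization $m\mapsto m(2n-2m-1)$ is routine and peaks at $m=\lfloor n/2\rfloor$ with value $\binom{n}{2}$, witnessed by the reverse permutation.

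Two small points of care. First, your phrase ``the gain being $k/2-1$ or $(k-1)/2$'' should read ``the gain being \emph{at least} $k/2-1$ or $(k-1)/2$'': equality in the splitting step holds only when the cycle already achieves $D_C=2(U-L)$, but the weak inequality is all you need. Second, when you claim that the optimal involution with $m$ transpositions has $\sum_j(b_j-a_j)=m(n-m)$, make the two-step argument explicit: for a fixed $2m$-element support the maximum of $\sum_j(b_j-a_j)$ over perfect matchings equals (sum of top half)$-$(sum of bottom half), and then over all supports this difference is maximized by $\{1,\dots,m\}\cup\{n-m+1,\dots,n\}$. Both steps are easy, but the first deserves a line since an arbitrary matching need not place the larger half of the support on the ``$b$'' side. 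With these details filled in, the argument is complete.
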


\smallskip The main results of this paper are as follows.

It is of interest to know how far away the diameter upper bound $f(T)$ bound can be from the true value in the worst case.  We show that for every $n$, there exists a transposition tree on $n$ vertices such that the difference between the diameter upper bound and the true diameter value of the Cayley graph is at least $n-4$.  This result gives a lower bound on the difference, and we leave it as an open problem to determine an upper bound for this difference.

We provide a new algorithm (Algorithm A below) which more efficiently computes, for any given transposition tree, an estimate of the diameter of the Cayley graph generated by the tree.  Remarkably, the proposed  algorithm requires only time $O(n^2)$ to compute, whereas no polynomial time algorithm is known for computing the previous bound $f(T)$. Furthermore, it is shown that the value computed by Algorithm A is at least as good as (i.e. is less than or equal to) the previous upper bound $f(T)$.  Such a result is possible because the new algorithm works directly with the transposition tree on $n$ vertices and does not require examining the vertices of the Cayley graph; it is only the proofs of our results that require examining the individual permutations.  It is proved that sometimes the value obtained by Algorithm A is strictly better than (i.e. is strictly less than) the previous upper bound in the literature (cf. Theorem~\ref{thm:beta:lessthan:fT:is:possible}).  Some advantages of the new algorithm over the previous upper bound are illustrated; for example, the proofs related to the worst case performance (of $n-4$) of the new algorithm are much simpler than those of the previous upper bound (cf. Proposition~\ref{prop:strictness:AlgA} and the remarks preceding it).  It is also shown that the value computed Algorithm A is not necessarily unique (Theorem~\ref{thm:nonuniqueness:algA}); this is an important result because such counterexamples are quite rare.

For all families of trees we have investigated so far, each of the possible values $\beta$ computed by Algorithm A is an upper bound on the diameter, i.e.
 $$\diam(\Gamma)~ \le~ \beta ~ \le ~ f(T);$$
here, we prove that the second inequality holds for all trees, and the first inequality holds for many families of trees (in fact for all trees investigated so far).

Some further interesting questions and open problems on this algorithm and related bounds are discussed towards the end of this paper.

\section{Strictness of the diameter upper bound}
Recall that the diameter of a Cayley graph $\Gamma$ generated by a transposition tree $T$ is bounded as
$$\diam(\Gamma) \le \max_{\pi \in S_n} \left\{ c(\pi)-n+\sum_{i=1}^n \dist_T(i,\pi(i)) \right\} =: f(T).$$
We now assess the performance of this bound and derive a strictness result.  The results in this section also appear in the conference paper \cite{Ganesan:ICMMSC:2012}.

Define the worst case performance of this upper bound by the quantity
$$\Delta_n := \max_{T \in \mathcal{T}_n} |f(T) - \diam(\Gamma)|,$$
where $\mathcal{T}_n$ denotes the set of all trees on $n$ vertices.

\begin{Theorem} \label{thm:strictness}
For every $n \ge 5$, there exists a tree $T=T(S)$ on $n$ vertices such that the difference between the actual diameter of the Cayley graph $\diam(\Cay(S_n,S))$ and the diameter upper bound $f(T)$ is at least $n-4$; in other words, $\Delta_n \ge n-4.$
\end{Theorem}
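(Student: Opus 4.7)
The plan is to construct, for each $n \ge 5$, an explicit tree $T_n$ on $n$ vertices together with a permutation $\pi_n \in S_n$ such that the Akers--Krishnamurthy estimate $f_{T_n}(\pi_n)$ is large, while the true diameter $\diam(\Cay(S_n, T_n))$ is provably much smaller. Theorem~\ref{Thm:pathfTmax:equals:diam} gives equality $f(T) = \diam(\Gamma)$ when $T$ is a path, and an analogous equality holds for the star $K_{1,n-1}$ (it is what Akers--Krishnamurthy use to derive the star diameter $\lfloor 3(n-1)/2 \rfloor$ exactly). Consequently, slack of order $n$ must come from a tree strictly between these two extremes, so I would choose $T_n$ to be an asymmetric tree sitting between a star and a path, such as a caterpillar or a broom whose branching structure permits many type-A cycle-merging edges during the marker-homing sort.

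To lower-bound $f(T_n)$, I would exhibit a specific $\pi_n$ and simply compute $f_{T_n}(\pi_n) = c(\pi_n) - n + \sum_{i=1}^n \dist_{T_n}(i, \pi_n(i))$. A natural candidate is a product of many disjoint $2$-cycles whose pairs of matched elements lie far apart in $T_n$: each $2$-cycle raises $c(\pi_n)$ by $1$, and placing the matched pairs at distant leaves inflates the distance sum. A direct calculation would then give an explicit $D := f_{T_n}(\pi_n) \le f(T_n)$, with $D$ linear in $n$.

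To upper-bound $\diam(\Cay(S_n, T_n))$, the idea is to show that for every $\sigma \in S_n$ there is a sorting sequence of length at most $D - (n-4)$. The crucial structural observation is that in the marker-homing framework, an edge $(i,j)$ that is simultaneously of type A and cycle-merging (i.e.\ $i$ and $j$ lie in different cycles of the current permutation) decreases $f_T$ by $3$ at the cost of a single transposition, whereas every other admissible move decreases $f_T$ by only $1$. Each such step therefore generates a slack of $2$ over the standard Akers--Krishnamurthy accounting. It then suffices to argue that the branching of $T_n$ forces at least $\lceil (n-4)/2 \rceil$ such cycle-merging type-A moves in the optimal sort of any worst-case permutation; combining this with the A--K bound applied to the remaining steps yields the required upper bound on the diameter, and subtracting gives $f(T_n) - \diam(\Cay(S_n,T_n)) \ge n-4$.

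The main obstacle is this last step. Whereas the lower bound on $f(T_n)$ requires only the evaluation of a single closed-form expression at one explicit permutation, the diameter is a maximum over all $n!$ permutations and so demands a uniform, constructive sorting argument. The construction must strike a balance: $T_n$ has to be asymmetric enough that many merging type-A edges are guaranteed in every optimal sort, yet regular enough that the sorting procedure and its length can be analyzed cleanly. Carrying out the amortized analysis that transforms a per-step slack of $2$ into a total slack of $n-4$ for the worst-case permutations is the delicate part of the proof.
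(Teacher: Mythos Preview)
Your choice of tree and your plan for lower-bounding $f(T)$ are on target: the paper uses precisely a broom---a path $1,2,\ldots,n-2$ with two extra leaves $n-1,n$ attached at $n-2$---and the lower bound $f(T)\ge\binom{n}{2}-2$ follows by evaluating $f_T$ at the single permutation $(1,n)\hat\sigma$, where $\hat\sigma$ is the reversal on the path $T-\{1,n\}$. (The paper in fact proves the exact equality $f(T)=\binom{n}{2}-2$ via an exhaustive case analysis with over twenty-five subcases, but for the theorem statement only the easy direction is needed.)

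The gap is in your plan for the diameter upper bound. Your amortized count of cycle-merging type-A moves is both unnecessary and unlikely to go through as stated. First, the broom has only a single branch vertex, so there is no structural reason why ``branching forces many cycle-merging moves''; the slack does not arise that way. Second, to bound $\diam(\Gamma)$ you must control $\dist_\Gamma(I,\pi)$ for \emph{every} $\pi$, not just those maximizing $f_T$, and the inequality $\dist_\Gamma(I,\pi)\le f_T(\pi)-2k(\pi)$ yields nothing unless you simultaneously control $f_T(\pi)$ and $k(\pi)$ across all permutations---which is essentially as hard as computing the diameter outright.

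The paper sidesteps all of this with a three-line greedy argument that makes no reference to $f_T$: home marker $1$ using at most $\diam(T)=n-2$ edges and delete vertex $1$; home marker $2$ in the remaining tree using at most $n-3$ edges and delete vertex $2$; continue until a star $K_{1,3}$ remains, whose Cayley graph has diameter $4$. This gives
\[
\diam(\Gamma)\le (n-2)+(n-3)+\cdots+3+4=\binom{n-1}{2}+1,
\]
and subtracting from $\binom{n}{2}-2$ yields $n-4$. So the step you flagged as ``the delicate part of the proof'' is in fact the trivial part; the single-permutation evaluation you propose for $f(T)$ is all that is really required.
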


\begin{proof}
Throughout this proof, we let $T$ denote the transposition tree defined by the edge set $\{(1,2),(2,3),\ldots,(n-3,n-2),(n-2,n-1),(n-2,n)\}$, which is shown in Figure~\ref{fig:tree:in:proof}.  For conciseness, we let $d(i,j)$ denote the distance in $T$ between vertices $i$ and $j$. Also, for leaf vertices $i,j$ of $T$, we let $T-\{i,j\}$ denote the tree on $n-2$ vertices obtained by removing vertices $i$ and $j$ of $T$.

Our proof is in two parts.  In the first part we establish that $f(T)$ is equal to ${n \choose 2}-2$.  In the second part we show that the diameter of the Cayley graph generated by $T$ is at most ${{n-1} \choose 2}+1$.  Together, this yields the desired result.
\begin{center}
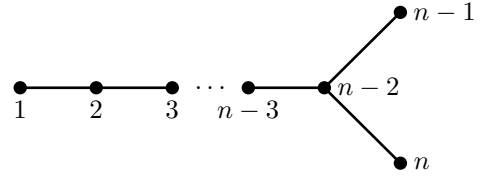
\begin{figure}
\begin{pspicture}(-3,0)(5,2)
\qdisk(0,1){\vs}
\qdisk(1,1){\vs}
\qdisk(2,1){\vs}
\qdisk(3,1){\vs}
\qdisk(4,1){\vs}
\qdisk(5,2){\vs}
\qdisk(5,0){\vs}
\psline[linewidth=\et]{-}(0,1)(1,1)
\psline[linewidth=\et]{-}(1,1)(2,1)
\uput[270](2.5,1.2){$\ldots$}
\psline[linewidth=\et]{-}(3,1)(4,1)
\psline[linewidth=\et]{-}(4,1)(5,2)
\psline[linewidth=\et]{-}(4,1)(5,0)

\uput[270](0,1){$1$}
\uput[270](1,1){$2$}
\uput[270](2,1){$3$}
\uput[270](3,1){$n-3$}
\uput[0](4,1){$n-2$}
\uput[0](5,2){$n-1$}
\uput[0](5,0){$n$}
\end{pspicture}
\caption{A transposition tree $T$ on $n$ vertices.} \label{fig:tree:in:proof}
\end{figure}
\end{center}

We now present the first part of the proof; we establish that $f(T)$, defined by
$$f(T) := \max_{\sigma \in S_n} \left\{ c(\sigma)-n+\sum_{i=1}^n \dist_T(i,\sigma(i)) \right\},$$
is equal to ${n \choose 2}-2$.  We prove this result by examining several sub-cases.  Define
$$f_T(\sigma) := c(\sigma)-n+S_T(\sigma),~~~S_T(\sigma):=\sum_{i=1}^n \dist_T(i,\sigma(i)).$$
We consider two cases, (1) and (2), depending on whether 1 and $n$ are in the same or different cycle of $\sigma$; each of these cases will further involve subcases.  In most of these subcases, we show that for a given $\sigma$, there is a $\sigma'$ such that $f_T(\sigma) \le f_T(\sigma')$ and $f_T(\sigma') \le {n \choose 2}-2$.

(1)  Assume 1 and $n$ are in the same cycle of $\sigma$.  So $\sigma=(1,k_1,\ldots,k_s,n,j_1,\ldots,j_\ell) \hat{\sigma}$.  The different subcases consider the different possible values for $s$ and $\ell$.

(1.1) Suppose $s=0,\ell=0$. So $\sigma=(1,n) \hat{\sigma} = (1,n) \sigma_2 \ldots \sigma_r$.
Then, $f_T(\sigma)=c(\sigma)-n+S_T(\sigma) = r-n+2(n-2)+S_{T-\{1,n\}}(\hat{\sigma})=2n-5+(r-2)+(n-2)+S_{T-\{1,n\}}(\hat{\sigma}) = 2n-5+c(\hat{\sigma})+(n-2)+S_{T-\{1,n\}}(\hat{\sigma}) = 2n-5+f_{T-\{1,n\}}(\hat{\sigma}) \le  2n-5+{{n-2} \choose 2} = {n \choose 2}-2$, where by Theorem~\ref{Thm:pathfTmax:equals:diam} the inequality holds with equality for some $\hat{\sigma}$. Thus, the maximum of $f_T(\sigma)$ over all permutations that contain $(1,n)$ as a cycle is equal to ${n \choose 2}-2$.  It remains to show that for all other kinds of permutations $\sigma$ in the symmetric group $S_n$, $f_T(\sigma) \le {n \choose 2}-2$.

(1.2) Suppose $s=1, \ell=0$.  So $\sigma=(1,i,n) \sigma_2 \ldots \sigma_r = (1,i,n) \hat{\sigma}$. We consider some subcases.

(1.2.1)  Suppose $i=n-1$.  Then, $f_T(\sigma) = r-n+(2n-2)+S_{T-\{1,n-1,n\}}(\hat{\sigma}) = 2n-4+f_{T-\{1,n-1,n\}}(\hat{\sigma}) \le 2n-4+{{n-3} \choose 2} \le {n \choose 2}-2$, where the inequality is by Theorem~\ref{Thm:pathfTmax:equals:diam}.

(1.2.2)  Suppose $2 \le i \le n-2$;  so $\sigma=(1,i,n) \hat{\sigma}$.  Let $\sigma' = (1,n)(i) \hat{\sigma}$.  It is easily verified that $f_T(\sigma) \le f_T(\sigma')$, and so the desired bound follows from applying subcase (1.1) to $f_T(\sigma')$.

 (1.3) Suppose $s=0, \ell=1$, so $\sigma=(1,n,i) \hat{\sigma}$. Since $f_T(\sigma) =  \ f_T(\sigma^{-1})$, this case also is settled by (1.2).

 (1.4)  Suppose $s=0, \ell \ge 2$, so $\sigma = (1,n,j_1,\ldots,j_\ell) \hat{\sigma}$.  Let $\sigma' = (1,n)(j_1,\ldots,j_\ell) \hat{\sigma}$. Observe that $f_T(\sigma) \le f_T(\sigma')$ iff $d(n,j_1)+d(j_\ell,1) \le d(n,1)+d(j_\ell,j_1)+1$.  We prove the latter inequality by considering 4 subcases:

(1.4.1) Suppose $j_1 < j_\ell \le n-2$.  Then, an inspection of the tree in Figure~\ref{fig:tree2:in:proof} shows that $d(n,j_1)+d(j_\ell,1) = d(n,1)+d(j_\ell,j_1)$, and so the inequality holds.
\begin{center}
\begin{figure}
\begin{pspicture}(-3,0)(5,2)
\qdisk(0,1){\vs}
\qdisk(1,1){\vs}
\qdisk(2,1){\vs}
\qdisk(3,1){\vs}
\qdisk(4,1){\vs}
\qdisk(5,2){\vs}
\qdisk(5,0){\vs}
\psline[linewidth=\et]{-}(0,1)(1,1)
\uput[270](0.5,1.2){$\ldots$}
\uput[270](1.5,1.2){$\ldots$}
\uput[270](3.5,1.2){$\ldots$}
\psline[linewidth=\et]{-}(4,1)(5,2)
\psline[linewidth=\et]{-}(4,1)(5,0)

\uput[270](0,1){$1$}
\uput[270](1,1){$2$}
\uput[270](2,1){$j_1$}
\uput[270](3,1){$j_\ell$}
\uput[0](4,1){$n-2$}
\uput[0](5,2){$n-1$}
\uput[0](5,0){$n$}
\end{pspicture}
\caption{Positions of $j_1$ and $j_\ell$ arising in subcase (1.4.1).} \label{fig:tree2:in:proof}
\end{figure}
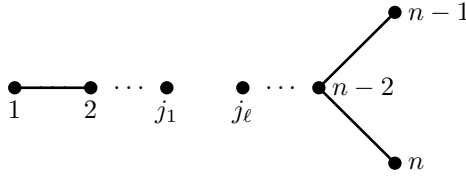
\end{center}
(1.4.2)  Suppose $j_1 > j_\ell$ and $j_1,j_\ell \le n-2$.  Then, $d(n,j_1)+d(j_\ell,1) \le d(1,n)$, and so the inequality holds.

(1.4.3) Suppose $j_1=n-1$.  Then $d(n,j_1)=2$.  Also, $d(j_\ell,1) \le d(n,1)$ and $d(j_\ell,j_1) \ge 1$, and so the inequality holds.

(1.4.4)  Suppose $j_\ell = n-1$.  Then, $d(j_\ell,1)=d(n,1)$ and $d(n,j_1) = d(j_\ell,j_1)$, and so again the inequality holds.

 (1.5) Suppose $s=1, \ell=1$, so $\sigma=(1,i,n,j) \hat{\sigma}$.  Let $\sigma'=(1,n)(i,j) \hat{\sigma}$.

 (1.5.1)  If $i=n-1$, by symmetry in $T$ between vertices $n$ and $n-1$, this subcase is resolved by subcase (1.4).

 (1.5.2) Let $2 \le i \le n-2$.  Then $d(1,i)+d(i,n)=d(1,n)$. So $f_T(\sigma) \le f_T(\sigma')$ iff $d(n,j)+d(j,1) \le d(1,n)+d(i,j)+d(j,i)+1$, which is true since $d(n,j)+d(j,1) \le d(1,n)+2$.

(1.6) Suppose $s=1, \ell \ge 2$. So $\sigma=(1,i,n,j_1,\ldots,j_\ell)\hat{\sigma}$.  Let $\sigma' = (1,n)(i,j_1,\ldots,j_\ell) \hat{\sigma}$.  It suffices to show that $f_T(\sigma) \le f_T(\sigma')$, i.e., that $d(1,i)+d(i,n)+d(n,j_1)+d(j_\ell,1) \le d(1,n)+d(1,n)+d(i,j_1)+d(j_\ell,i)+1$.  We examine the terms of this latter inequality for various subcases:

(1.6.1) Suppose $2 \le i \le n-2$.  Then $d(1,i)+d(i,n)=d(1,n)$.

(1.6.1a)  If $j_\ell = n-1$, then $d(j_\ell,i)=n-i-1$ and $d(i,j_1)=|i-j_1|$, and so the inequality holds iff $-1 \le |j_1-i|+j_1-i$, which is clearly true.

(1.6.1b) Suppose $2 \le j_\ell \le n-2$.  Then, the inequality holds iff $d(n,j_1)+j_\ell-1 \le n-2+|i-j_1|+|i-j_\ell|+1$, which can be verified separately for the cases $j_1=n-1$ and $2 \le j_1 \le n-2$.

(1.6.2) Suppose $i=n-1$.  By symmetry in $T$ of the vertices $n$ and $n-1$, this case is resolved by (1.4).

(1.7) Suppose $s \ge 2, \ell=0$, so $\sigma=(1,k_1,\ldots,k_s,n) \hat{\sigma}$.  Since $f_T(\sigma)=f_T(\sigma^{-1})$, this case is resolved by (1.4).

(1.8) Suppose $s \ge 2, \ell=1$, so $\sigma=(1,k_1,\ldots,k_s,n,j_1) \hat{\sigma}$.  Let $\sigma'=(1,n)(k_1,\ldots,k_s,j_1) \hat{\sigma}$.   We can assume that $2 \le k_1 \le n-2$ since the $k_1=n-1$ case is resolved by (1.4) due to the symmetry in $T$. To show $f_T(\sigma) \le f_T(\sigma')$, it suffices to prove the inequality $d(1,k_1)+d(k_s,n)+d(n,j_1)+d(j_1,1) \le d(1,n)+d(n,1)+d(k_s,j_1)+d(j_1,k_1)+1$.  We prove this inequality by separately considering whether $j_1=n-1$ or $k_s=n-1$ or neither:

(1.8.1) Suppose $j_1=n-1$.  Substituting $d(k_s,n)=n-k_s-1, d(n,j_1)=2, d(j_1,1)=j_1-1$, etc, we get that the inequality holds iff $k_1 \le |j_1-k_1|+n-2$, which is clearly true.

(1.8.2)  Suppose $2 \le j_1 \le n-2$.  Then $d(n,j_1)=n-j_1-1$, and so the inequality holds iff $k_1+d(k_s,n)+n-3 \le 2n-3+d(k_s,j_1)+d(j_1,k_1)$.  If $k_s=n-1$, this reduces to $j_1+k_1 \le 2n-3+|j_1-k_1|$, and is true, whereas if $2 \le k_s \le n-2$, this reduces to $k_1-k_s \le 1+|j_1-k_s|+|j_1-k_1|$, which is true due to the triangle inequality.

(1.9) Suppose $s,l \ge 2$, so $\sigma=(1,k_1,\ldots,k_s,n,j_1,\ldots,j_\ell) \hat{\sigma}$.

Let $\sigma'=(1,k_1,\ldots,k_s,n)(j_1,\ldots,j_\ell) \hat{\sigma}$.  It suffices to show that $S_T(\sigma) \le S_T(\sigma')+1$, i.e., that $d(n,j_1)+j_\ell \le n+d(j_1,j_\ell)$.

(1.9.1) If $j_1 < j_\ell$, then $j_1 \le n-2$, and so $d(n,j_1) =n-j_1-1$ and $d(j_1,j_\ell)=j_\ell-j_1$; the inequality thus holds.

(1.9.2) If $j_1 > j_\ell$, then $d(j_1,j_\ell)=j_1-j_\ell$, and so it suffices to show that $d(n,j_1) \le n+j_1-2j_\ell$.  It can be verified that this holds if $j_1=n-1$ and also if $2 \le j \le n-2$.

(2) Now suppose 1 and $n$ are in different cycles of $\sigma$.  So let $\sigma=(1,k_1,\ldots,k_s)(n,j_1,\ldots,j_\ell) \hat{\sigma}$.

(2.1) Suppose $s=0$.  Then $f_T(\sigma) \le {{n-1} \choose 2}-2$, by induction on $n$.

(2.2) Suppose $s=1$.  So let $\sigma=(1,i)(n,j_1,\ldots,j_\ell)\hat{\sigma}$.  By symmetry in $T$ between vertices $n$ and $n-1$ and subcase (1.1), we may assume $i \ne n-1$. Let $\sigma' = (1,n)(i,j_1,\ldots,j_\ell)\hat{\sigma}$.  It suffices to show that $S_T(\sigma) \le S_T(\sigma')$.  If $\ell=0$ this is clear since $d(1,i) \le d(1,n)$.  Suppose $\ell \ge 2$.  Then, by the triangle inequality, $d(n,j_1)+d(j_\ell,n) \le d(j_1,i)+d(i,n)+d(i,j_\ell)+d(i,n) = d(j_1,i)+d(i,j_\ell)+(n-i-1)2$.  Also, $d(1,n)=d(1,i)+d(i,n) = d(1,i)+n-i-1$.  Hence, $2d(1,i)+d(n,j_1)+d(j_\ell,n) \le 2d(1,n)+d(i,j_1)+d(j_\ell,i)$.  Hence, $S_T(\sigma) \le S_T(\sigma')$.  The case $\ell=1$ can be similarly resolved by substituting $j_1$ for $j_\ell$ in the $l\ge2$ case here.

(2.3) Suppose $s \ge 2, \ell=0$.  Then, by Theorem~\ref{Thm:pathfTmax:equals:diam}, $f_T(\sigma) \le {{n-1} \choose 2}$.

(2.4) Suppose $s \ge 2, \ell=1$, so $\sigma=(1,k_1,\ldots,k_s)(n,j_1) \hat{\sigma}$.  Let $\sigma'=(1,n)(k_1,\ldots,k_s,j_1)\hat{\sigma}$.  It suffices to show that $d(1,k_1)+d(1,k_s)+2d(n,j_1) \le 2d(1,n)+d(k_s,j_1)+d(k_1,j_1)$.  This inequality is established by considering the two subcases:

(2.4.1)  Suppose $j_1=n-1$.  Then the inequality holds iff $2k_s+k_1 \le 3n-7+|k_1-j_1|$, which is true since $k_1,k_2 \le n-2$ and $|k_1-j_1| \ge 1$.

(2.4.2) Suppose $j_1 \ne n-1$.  Then the inequality holds iff $k_1-j_1+k_s-j_1 \le |k_1-j_1|+|k_s-j_1|$, which is clearly true.

(2.5) Suppose $s, \ell \ge 2$, so $\sigma=(1,k_1,\ldots,k_s)(n,j_1,\ldots,j_\ell)\hat{\sigma}$.

Let $\sigma'=(1,n)(k_1,\ldots,k_s,j_1,\ldots,j_\ell)\hat{\sigma}$.  To show $f_T(\sigma) \le f_T(\sigma')$, it suffices to show that $d(1,k_1)+d(k_s,1)+d(n,j_1)+d(j_\ell,n) \le 2d(n,1)+d(k_s,j_1)+d(j_\ell,k_1)$.  By symmetry in $T$ between vertices $n$ and $n-1$, we may assume $k_1,\ldots,k_s \ne n-1$, since these cases were covered in (1).  We establish this inequality as follows:

(2.5.1)  Suppose $j_1=n-1$.  Then $d(n,j_1)=2$ and $d(n,j_\ell)=n-j_\ell-1$.  So the inequality holds iff $2k_s \le 2(n-2)+|j_\ell-k_1|+j_\ell-k_1$, which is true since $k_s \le n-2$ and $|j_\ell-k_1|+j_\ell-k_1 \ge 0$.

(2.5.2)  Suppose $j_1 \ne n-1$.  Then $d(n,j_1)=n-j_1-1$.  If $j_\ell=n-1$, the inequality holds iff $2 k_1 \le 2(n-2)+j_1-k_s+|j_1-k_s|$,which is true since $k_1 \le n-2$.  If $j_\ell \ne n-1$, the inequality holds iff $k_s-j_1+k_1-j_\ell \le |k_s-j_1|+|k_1-j_\ell|$, which is true.

This concludes the first part of the proof.

We now provide the second part of the proof.  Let $\Gamma$ be the Cayley graph generated by $T$.  We show that $\diam(\Gamma) \le {{n-1} \choose 2}+1.$  Let $\pi \in S_n$, and suppose each vertex $i$ of $T$ has marker $\pi(i)$.  We show that all markers can be homed using at most the proposed number of transpositions.  Since $\diam(T)=n-2$, marker 1 can be moved to vertex 1 using at most $n-2$ transpositions.  Now remove vertex 1 from the tree $T$, and repeat this procedure for marker 2, and then for marker 3, and so on, removing each vertex from $T$ after its marker is homed.  Continuing in this manner, we eventually arrive at a star $K_{1,3}$, whose Cayley graph has diameter 4.  Hence, the diameter of $\Gamma$ is at most $[(n-2)+(n-3)+\ldots+5+4+3]+4 = {{n-1} \choose 2}+1$.  This completes the proof.
\end{proof}

Let $s(n)$ denote the number of non-isomorphic trees on $n$ vertices.  Let $\Delta_n$ be the strictness as defined above.   Then, computer simulations yield the results in Table~\ref{table:summary}:
\begin{table}[ht]
\caption{Strictness of the diameter upper bound} 
\centering 
\begin{tabular}{c | c c c c c} 
$n$ & 5 & 6 & 7 & 8 & 9 \\
\hline 
$s(n)$ & 3 & 6 & 11 & 23 & 47 \\
$\Delta_n $ & 1 & 2 & 3 & 4 & 6 \\
\end{tabular}
\label{table:summary} 
\end{table}

These results imply that the $n-4$ lower bound for $\Delta_n$ is best possible, and an open problem is to obtain an upper bound for $\Delta_n$.  That this lower bound is exact for $n=5$ can also be obtained using the results given above, as follows.  There are only three nonisomorphic trees on 5 vertices, namely the tree given in the proof of Theorem~\ref{thm:strictness}, for which the $n-4$ lower bound is achieved, and the trees of maximum diameter (the path) and minimum diameter (the star), for which the diameter bound $f(T)$ is known to be exact (cf. \cite{Ganesan:JCMCC}).

\section{The algorithm}
\label{sec:algorithm}

We now provide an algorithm that takes as its input a transposition tree $T(S)$ on $n$ vertices and provides as output an estimate of the diameter of the Cayley graph $\Cay(S_n,S)$.   The notation used to describe our algorithm should be self-explanatory and is similar to that used in Knuth \cite{Knuth:2011}.

\bigskip
\noindent \textbf{Algorithm A}
\\Given a transposition tree $T=T(S)$, this algorithm computes a value $\beta$ which is an estimate for the diameter of the Cayley graph $\Cay(S_n,S)$.  $|V(T)|$ denotes the current value of the number of vertices in $T$; initially, $V(T)=\{1,2,\ldots,n\}$.
\\ \textbf{A1.} [Initialize.]  \\Set $\beta \leftarrow 0$.
\\ \textbf{A2.} [Find two vertices $i,j$ of $T$ that are a maximum distance apart.]  \\Find any two vertices $i,j$ of $T$ such that $\dist_T(i,j)=\diam(T)$.
\\ \textbf{A3.} [Update $\beta$, and remove $i,j$ from $T$.] \\Set $\beta \leftarrow \beta+(2 \diam(T)-1)$, and set $T \leftarrow T-\{i,j\}$.  If $T$ still has 3 or more vertices, return to step A2; otherwise, set $\beta \leftarrow \beta+|V(T)|-1$ and terminate this algorithm. \qed

\bigskip \textbf{Example 1.} Consider the transposition tree $\{(1,2),(2,3),(3,4),(4,5),(4,6),(3,7),(7,8)\}$ shown in Figure~\ref{fig:tree:T1}.  If Algorithm A picks the sequence of vertex pairs during step A2 to be $\{1,8\},\{5,7\}$ and $\{2,6\}$, then the value returned by the algorithm is $\beta = 7+5+5+1=18$.  On the other hand, if Algorithm A picks the vertex pairs to be $\{1,5\},\{6,8\}$ and $\{2,7\}$, then the value returned by the algorithm is still $\beta=7+7+3+1=18$.  In this example, the value returned by the algorithm is unique even though the subtrees $T-\{1,8\}$ and $T-\{1,5\}$ are non-isomorphic and even have different diameters. \qed
\begin{center}
\begin{figure}
\begin{pspicture}(-3,-0.5)(5,2.5)
\qdisk(0,1){\vs}
\qdisk(1,1){\vs}
\qdisk(2,1){\vs}
\qdisk(3,0){\vs}
\qdisk(3,2){\vs}
\qdisk(4,2.5){\vs}
\qdisk(4,1.5){\vs}
\qdisk(4,0){\vs}
\psline[linewidth=\et]{-}(0,1)(1,1)
\psline[linewidth=\et]{-}(1,1)(2,1)
\psline[linewidth=\et]{-}(2,1)(3,0)
\psline[linewidth=\et]{-}(2,1)(3,2)
\psline[linewidth=\et]{-}(3,2)(4,2.5)
\psline[linewidth=\et]{-}(3,2)(4,1.5)
\psline[linewidth=\et]{-}(3,0)(4,0)
\uput[270](0,1){$1$}
\uput[270](1,1){$2$}
\uput[270](2,1){$3$}
\uput[270](3,0){$7$}
\uput[90](3,2){$4$}
\uput[0](4,2.5){$5$}
\uput[0](4,1.5){$6$}
\uput[0](4,0){$8$}
\end{pspicture}
\caption{The transposition tree on 8 vertices in Example 1.} \label{fig:tree:T1}
\end{figure}
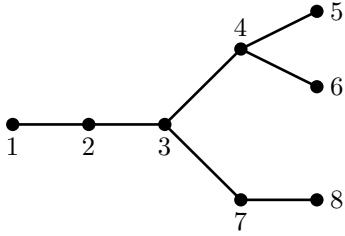
\end{center}

Despite the outcome in the above example where the value computed by Algorithm A is independent of the vertex pairs chosen in step A2, it is shown later that there do exist trees for which the value computed by the algorithm is not unique (i.e. the final value depends on which vertex pairs were chosen during step A2), though this non-uniqueness property is rare.

The eccentricity of a vertex $u$ in a graph is defined to be the maximum value of the distance from $u$ to a vertex of the graph. The center of a graph is defined to be the set of vertices of minimum eccentricity.  It is known that the center of a tree is either a single vertex or two adjacent vertices.  Also, every path of maximum length in a tree passes through its center.

In view of this, one way to implement step A2, which picks any two vertices of the tree that are a maximum distance apart, is as follows.  Start with an arbitrary vertex $u$ of the tree, and do a depth-first search to find a vertex $i$ farthest that is from $u$ ($i$ and $u$ will be on different `sides' of the center).  Then start at vertex $i$ and do another depth-first search to find a vertex $j$ that is farthest from $i$.  Then, the resulting $i-j$ path has maximum length in the tree.  (Alternatively, one could also carry out a breadth-first search rather than a depth-first search.)  Since a depth-first search on the transposition tree on $n$ vertices takes time $O(n)$, the first iteration of Algorithm A takes time $O(n)$, the second iteration takes time $O(n-2)$, and so on.  Thus, Algorithm A takes time $O(n^2)$.  The previously known estimate in the literature for the diameter of Cayley graphs generated by transposition trees works on each of the $n!$ vertices of the Cayley graph and takes time $\Omega(n!)$ to evaluate; hence, Algorithm A, is a significant improvement over the previous bound.

\section{Properties and performance of the algorithm}

In this section we prove that the value obtained by Algorithm A is less than or equal to the previously known diameter upper bound.  We also show that there exist instances where the diameter estimate computed by Algorithm A is strictly better than (i.e. is strictly less than) than the previously known diameter upper bound $f(T)$.  Furthermore, we construct a rare transposition tree for which the value computed by Algorithm A is not unique.

\begin{Theorem} \label{thm:beta:le:AKbound}
Let $T$ be a transposition tree on vertex set $\{1,2,\ldots,n\}$, and let $\beta$ be the value obtained by Algorithm A for this tree. Then, $\beta$ is less than or equal to the previously known upper bound on the diameter of the Cayley graph, i.e.
$$\beta \le f(T).$$
\end{Theorem}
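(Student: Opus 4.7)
The plan is to exhibit a single permutation $\pi \in S_n$ with $f_T(\pi) = \beta$; since $f(T) = \max_{\sigma \in S_n} f_T(\sigma) \ge f_T(\pi)$, the desired inequality $\beta \le f(T)$ follows at once.

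First, I would record the execution trace of Algorithm A: let the pairs chosen in successive invocations of step A2 be $(i_1,j_1),\ldots,(i_k,j_k)$; let $d_t := \diam(T_t)$ denote the diameter of the subtree $T_t$ present at iteration $t$; and let $r := |V(T_{k+1})| - 1 \in \{0,1\}$ denote the leftover contribution, so that $n = 2k+1+r$ and $\beta = \sum_{t=1}^{k}(2d_t - 1) + r$. The main preliminary observation is that $d_T(i_t, j_t) = d_t$ for every $t$. Indeed, the endpoints of a diametrical path in any tree on at least three vertices are leaves, so each $i_t, j_t$ is a leaf of $T_t$; since deleting a leaf from a tree preserves distances among all remaining vertices, an induction on $t$ shows that $T_t$ equals the subgraph of $T$ induced by $V(T_t)$, and all distance computations inside $T_t$ agree with those in $T$. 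When $r = 1$, writing $\{p,q\} = V(T_{k+1})$, the same induction shows $T_{k+1}$ is a two-vertex subtree of $T$, hence $d_T(p,q) = 1$.

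Next, I would define
$$\pi := \begin{cases} (i_1,j_1)(i_2,j_2)\cdots(i_k,j_k), & r=0,\\ (i_1,j_1)(i_2,j_2)\cdots(i_k,j_k)(p,q), & r=1,\end{cases}$$
a product of disjoint transpositions on vertices of $T$ (with a single leftover fixed point when $r=0$). A direct count gives $c(\pi) = k+1$ in both cases, and
$$\sum_{i=1}^{n} d_T(i,\pi(i)) = 2\sum_{t=1}^{k} d_T(i_t,j_t) + 2r\,d_T(p,q) = 2\sum_{t=1}^{k} d_t + 2r$$
by the two observations above. Substituting into $f_T(\pi) = c(\pi) - n + \sum_i d_T(i,\pi(i))$ and using $n = 2k+1+r$,
$$f_T(\pi) = (k+1) - (2k+1+r) + 2\sum_{t=1}^{k} d_t + 2r = 2\sum_{t=1}^{k} d_t - k + r = \beta,$$
which closes the argument.

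There is no conceptually deep step in this plan; the only point requiring care is the identification of diameters in the pruned subtrees with distances in the ambient tree $T$ (so that $d_t = d_T(i_t,j_t)$ and, when needed, $d_T(p,q)=1$). Everything else is a short bookkeeping calculation, and the uniform construction of $\pi$ handles both parities of $n$ at once.
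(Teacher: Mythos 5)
Your proposal is correct and follows essentially the same route as the paper: build the permutation $\pi$ from the vertex pairs chosen in step A2 (together with the leftover pair when $n$ is even), verify $f_T(\pi)=\beta$, and conclude $\beta\le\max_{\sigma}f_T(\sigma)=f(T)$. The only difference is that you make explicit the bookkeeping the paper leaves implicit — that the chosen vertices are leaves of the successive pruned subtrees, so distances (and the adjacency of the final pair) computed in those subtrees coincide with distances in $T$ — which is a welcome clarification rather than a new idea.
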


\begin{proof}
Let $\{i_1,j_1\},\{i_2,j_2\},\ldots,\{i_r,j_r\}$ be the vertex pairs chosen by Algorithm A during the $r$ iterations of step A2, where $r= \lfloor (n-1)/2 \rfloor$.  Construct a permutation $\pi$ as follows.  If $n$ is odd, then $T$ contains only one vertex, $i_{r+1}$ say, when the algorithm terminates.  In this case, we let $\pi = (i_1,j_1) \ldots (i_r,j_r)(i_{r+1}) \in S_n$.  If $n$ is even, then $T$ contains two vertices, $i_{r+1}$ and $j_{r+1}$ say, when the algorithm terminates.  In this case, we let $\pi=(i_1,j_1) \ldots (i_{r+1},j_{r+1}) \in S_n$.  In either case, $r+1 = \lceil n/2 \rceil$, the value of $\beta$ computed by the algorithm equals
$$\beta=\left(\sum_{\ell=1}^r \left\{2 \dist_T(i_\ell,j_\ell)-1 \right\} \right)+\left\{ (n+1) \mod 2 \right\},$$
and the quantity $f_T(\pi) := c(\pi)-n+\sum_{i=1}^n \dist_T(i,\pi(i))$ evaluates to
$$f_T(\pi)=(r+1)-n+ \left(2 \sum_{\ell=1}^r \dist_T(i_\ell,j_\ell) \right)+2 \left\{(n+1) \mod 2 \right\}.$$
A quick check shows that the expressions above for $\beta$ and $f_T(\pi)$ are equal.  Hence, for every sequence of vertex pairs chosen by Algorithm A, there exists a permutation $\pi$ such that the value $\beta$ returned by Algorithm A is at most $f_T(\pi)$.  Hence, $\beta \le \max_{\pi \in S_n} f_T(\pi)$.
\end{proof}

Since each of the possible values $\beta$ computed by Algorithm A is at most the previous upper bound $f(T)$, it follows immediately that the largest of the possible values computed by Algorithm A, denoted by $\beta_{\max}$, is also at most the previous upper bound.  We now show that $\beta_{\max}$ is an upper bound on the diameter the Cayley graph:

\begin{Theorem} \label{thm:betamax:ubound}
Let $\Gamma$ be the Cayley graph generated by a transposition tree $T$.  Let $\beta_{\max}$ denote the maximum possible value returned by Algorithm A for this tree.  Then,
$$\diam(\Gamma) ~\le~ \beta_{\max}~ \le~ f(T).$$
\end{Theorem}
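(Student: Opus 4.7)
The second inequality $\beta_{\max} \le f(T)$ is immediate from Theorem~\ref{thm:beta:le:AKbound}: every individual value $\beta$ returned by Algorithm A satisfies $\beta \le f(T)$, and taking the maximum over all possible runs preserves the bound.

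For the first inequality $\diam(\Gamma) \le \beta_{\max}$, my plan is to establish the stronger identity $\beta_{\max} = f(T)$ and then combine it with Corollary~\ref{cor:diam:ubound} to conclude $\diam(\Gamma) \le f(T) = \beta_{\max}$. Since $\beta_{\max} \le f(T)$ has already been established, it suffices to exhibit, for every transposition tree $T$, a run of Algorithm A whose output equals $f(T)$. Equivalently, letting $\pi^* \in S_n$ be a maximizer of $f_T(\pi) = c(\pi) - n + \sum_i \dist_T(i,\pi(i))$, I plan to show that $\pi^*$ can be chosen in the exact form produced by Algorithm A: a product $(i_1,j_1) \cdots (i_r,j_r)$ of disjoint transpositions in which, for each $\ell$, the pair $\{i_\ell, j_\ell\}$ realizes the diameter of the tree obtained from $T$ by deleting the previously chosen endpoints. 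In all of the examples that can be checked directly (paths, stars, the tree of Theorem~\ref{thm:strictness}, the tree of Example~1, etc.) the equality $\beta_{\max} = f(T)$ holds.

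The construction of such a $\pi^*$ proceeds through three reductions followed by induction on $n$. First, reduce to an involution: any cycle of length $k \ge 3$ in a maximizer of $f_T$ can be split into a transposition plus a smaller cycle without decreasing $f_T$, using the identity $c(\tau\pi) = c(\pi) \pm 1$ recalled in the Preliminaries together with tree-metric arguments in the spirit of the case analysis that underlies the proof of Theorem~\ref{thm:strictness}. Second, reduce to a matching of maximum cardinality on $V(T)$: every missing pair $(a,b)$ that could be adjoined to the involution contributes $2\dist_T(a,b) - 1 \ge 1$ to $f_T$, so an optimal involution saturates as many vertices as possible. Third, reduce to a matching containing a diameter pair $\{u,v\}$ of $T$ by an exchange argument; the induction on the subtree $T - \{u, v\}$ then finishes, because the recursion of Algorithm A exactly mirrors the recursive description of the candidate maximizer.

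The main obstacle will be the exchange step in the third reduction. If $u$ is paired with some $x$ and $v$ with some $y$ in the current maximizer, the naive swap $(u,x)(v,y) \to (u,v)(x,y)$ changes $\sum_i \dist_T(i,\pi(i))$ by $\dist_T(u,v) + \dist_T(x,y) - \dist_T(u,x) - \dist_T(v,y)$, which by the four-point condition on tree metrics can be strictly negative precisely when $x$ and $y$ lie on a common branch hanging off the $u$-$v$ diameter path. In those configurations I will need to exploit the flexibility of Algorithm A, which may legitimately select a different diameter pair (e.g.\ $\{u,x\}$ or $\{v,y\}$, both of which are themselves diameter pairs whenever the problematic configuration arises) rather than $\{u,v\}$ itself, or else set up a more global rearrangement; arranging this so that a valid Algorithm-A-style maximizer is always produced is the delicate part of the argument.
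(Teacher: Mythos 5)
Your handling of the second inequality is the same as the paper's: it follows at once from Theorem~\ref{thm:beta:le:AKbound}. The problem is the first inequality. You propose to prove the stronger identity $\beta_{\max} = f(T)$ and then invoke Corollary~\ref{cor:diam:ubound}; but that identity is precisely one of the open questions posed in this paper (``Is it true that for all trees, the maximum possible value $\beta_{\max}$ returned by the algorithm is equal to $f(T)$?''), and your sketch does not close it. Each of your three reductions has a real gap. In the first, splitting a cycle of a maximizer into a transposition times a shorter cycle raises $c(\pi)$ by $1$ but, by the four-point condition for tree metrics, can lower $\sum_i \dist_T(i,\pi(i))$ by $2$ or more, so $f_T$ can strictly decrease; you would need to show a good split always exists, which is not argued. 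In the third, your proposed repair is false as stated: take a path $u=a_0,a_1,\ldots,a_{10}=v$ with a pendant path $a_5,b_1,b_2$, and $x=b_1$, $y=b_2$; then the swap $(u,x)(v,y)\to(u,v)(x,y)$ decreases the distance sum by $2$ (the ``problematic configuration''), yet neither $\{u,x\}$ nor $\{v,y\}$ is a diametral pair, so Algorithm~A cannot legitimately select them and the exchange argument has no fallback. You flag this step yourself as the delicate part, but it is exactly where the argument stands or falls, so the proposal as written does not prove $\diam(\Gamma)\le\beta_{\max}$.

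The inequality does not need any comparison with $f(T)$ at all. The paper's proof is a direct marker-homing argument on the tree: given an arbitrary $\pi$, pick a diametral pair $i,j$ of $T$; if marker $i$ is within distance $\diam(T)-1$ of vertex $i$, home marker $i$ and then marker $j$ in at most $2\diam(T)-1$ moves and delete the leaves $i,j$; if marker $i$ sits at distance exactly $\diam(T)$, then $\{i,\pi^{-1}(i)\}$ is itself a diametral pair, and homing marker $\pi^{-1}(i)$ first (its last move parks marker $i$ one step closer) again costs at most $2\diam(T)-1$ moves, after which those two leaves are deleted. Iterating produces a sequence of vertex pairs that is a legitimate run of Algorithm~A, so $\dist_\Gamma(I,\pi)$ is at most the corresponding output $\beta$, hence at most $\beta_{\max}$. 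This is both elementary and constructive, and it sidesteps entirely the question of whether $\beta_{\max}=f(T)$. If you want to salvage your route, you would have to supply full proofs of the involution reduction and a genuinely global exchange argument, which is a much harder undertaking than the theorem requires.
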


\begin{proof}
The second inequality has already been proved.  We now prove the first inequality.  Let $\pi \in S_n$.  Suppose that initially each vertex $k$ of the tree has marker $\pi(k)$. It suffices to show that all markers can be homed to their respective vertices using at most $\beta_{\max}$ edges of the tree.

Consider the following procedure.  Pick any two vertices $i,j$ of $T$ that are a maximum distance apart. We consider two cases, depending on the distance in $T$ between vertex $i$ and the current location $\pi^{-1}(i)$ of the marker $i$:

\textbf{Case 1:} Suppose that the distance in $T$ between vertices $i$ and $\pi^{-1}(i)$ is at most $\diam(T)-1$.  Then marker $i$ can be homed using at most $\diam(T)-1$ transpositions.  And then, marker $j$ can be homed using at most $\diam(T)$ edges.  Hence, markers $i$ and $j$ can both be homed to leaf vertices $i$ and $j$, respectively, using at most $2 \diam(T)-1$ edges.  We now let $i_1=i$ and $j_1=j$.

\textbf{Case 2:} Now consider the case where the distance in $T$ between vertices $i$ and $\pi^{-1}(i)$ is equal to $\diam(T)$.  Let $x$ be the unique vertex of the tree adjacent to $\pi^{-1}(i)$.  In the first sequence of steps, marker $\pi^{-1}(i)$ can be homed to vertex $\pi^{-1}(i)$ using at most $\diam(T)$ edges.  The last of these transpositions will home marker $\pi^{-1}(i)$ and place marker $i$ at vertex $x$, whose distance to $i$ is exactly $\diam(T)-1$.  In the second sequence of steps, marker $i$ can be homed to vertex $i$ using at most $\diam(T)-1$ edges.  Hence, using these two sequences of steps, markers $i$ and $\pi^{-1}(i)$ can both be homed using at most $2 \diam(T)-1$ edges.  We now let $i_1=i$ and $j_1=\pi^{-1}(i)$.

We now remove from $T$ the vertices $i_1$ and $j_1$, and repeat this procedure on $T-\{i_1,j_1\}$ to get another pair $\{i_2,j_2\}$.  Continuing in this manner until $T$ contains at most two vertices, we see that all markers can be homed using at most
$$ \left\{ \sum_{\ell=1}^r \left( 2 \dist_T(i_\ell,j_\ell)-1 \right)\right\} + \left\{(n+1) \mod 2 \right\}$$
edges.  This quantity is equal to the value $\beta$ returned by the Algorithm when it chooses $\{i_1,j_1\},\ldots,\{i_r,j_r\}$ as its vertex pairs during each iteration of step A2, and hence this quantity is at most $\beta_{\max}$.  Thus, $\dist_{\Gamma}(I,\pi) \le \beta_{\max}$ for all $\pi \in S_n$.
\end{proof}

Let $\mathcal{B}$ denote the set of possible values that can be the output of Algorithm A.  Thus, $\beta_{\max} := \max_{\beta \in \mathcal{B}} \beta$. An open problem is to determine whether {\em each} of the possible values returned by the algorithm is an upper bound on the diameter, i.e. whether $\beta$ is an upper bound on the diameter of the Cayley graph for each $\beta \in \mathcal{B}$. The examples studied so far show that for many families of trees (in fact, for all the trees investigated so far), the minimum possible value returned by the algorithm is also an upper bound on the diameter.  Thus, we believe that each of the possible values $\beta \in \mathcal{B}$ is an upper bound on the diameter of the Cayley graph, but we do not have a proof of this.

We now prove that the value returned by the algorithm is not necessarily unique:

\begin{Theorem} \label{thm:nonuniqueness:algA}
There exist transposition trees for which the value returned by Algorithm A is not unique.
\end{Theorem}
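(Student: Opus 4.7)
My plan is to exhibit an explicit $9$-vertex transposition tree $T$ on which Algorithm~A admits two different outputs, corresponding to two different legal choices of diametrical pair in Step~A2. Take $T$ to be the tree with edge set
$$
E(T)=\{(1,2),(2,3),(3,4),(4,5),(4,6),(4,9),(3,7),(7,8)\};
$$
this is the Example~1 tree augmented by a pendant leaf labelled $9$ attached to vertex $4$, and it has $\diam(T)=4$.

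First I would enumerate the diametrical pairs of $T$: they are $(1,5),(1,6),(1,9),(1,8),(5,8),(6,8),(9,8)$, and each of them contains at least one of the two vertices $1$ and $8$. Thus the pair $(1,8)$ is \emph{destructive}, in the sense that its removal eliminates every diametrical pair at once and brings $\diam(T-\{1,8\})$ down to $3$. Any pair $(1,x)$ with $x\in\{5,6,9\}$ is \emph{preserving}: removing $(1,5)$, for example, leaves $(6,8)$ and $(9,8)$ intact, so $\diam(T-\{1,5\})$ is still $4$.

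Next I would simulate Algorithm~A along two executions and bookkeep the iteration diameters. If Step~A2 first picks $(1,8)$, the sequence of diameters is $4,3,3,2$ and the algorithm returns $\beta = 7+5+5+3+0 = 20$. If instead Step~A2 first picks $(1,5)$, the sequence of diameters is $4,4,3,2$ and the algorithm returns $\beta = 7+7+5+3+0 = 22$. Since $20 \ne 22$, this establishes the theorem. I would also briefly verify, by exploiting the interchangeability of the leaves $5,6,9$ at vertex $4$, that subsequent diametrical-pair choices within each of the two executions do not alter the value of $\beta$.

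The main obstacle is finding the tree; verifying it is essentially a routine trace. The paper explicitly notes that such counterexamples are rare, and Example~1 is the prototypical near-miss: there too the iteration-diameter sequence depends on the first pair chosen (either $4,3,3$ or $4,4,2$), but its \emph{sum} is invariant at $10$, so $\beta$ coincides across executions. Augmenting Example~1 by a single pendant leaf at vertex~$4$ is precisely what breaks this numerical coincidence, producing iteration-diameter sums of $12$ and $13$ that differ by exactly one and hence yielding the two distinct outputs.
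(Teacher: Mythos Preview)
Your proposal is correct and takes essentially the same approach as the paper: your tree is isomorphic to the paper's tree (the paper's edge set is $\{(1,2),(2,3),(3,6),(3,4),(4,5),(6,7),(6,8),(6,9)\}$, and the relabeling $6\leftrightarrow 4$, $5\leftrightarrow 8$ carries it to yours), and both proofs exhibit two executions of Algorithm~A returning $\beta=20$ and $\beta=22$. Your description of the example as ``Example~1 plus a pendant leaf at vertex~4'' is a nice piece of motivation that the paper does not spell out, but the underlying counterexample and verification are the same.
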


\begin{proof} Consider the transposition tree defined by the edge set $\{(1,2),(2,3),$ $(3,6),(3,4),(4,5),$ $(6,7),$ $(6,8),(6,9)\}$, which is shown in Figure~\ref{fig:tree:T2}. If Algorithm A picks the sequence of vertex pairs during step A2 to be $\{1,5\},\{2,7\},\{4,8\}$ and $\{3,9\}$, then the value returned by the algorithm is $\beta=7+5+5+3=20$.  And if Algorithm A picks the vertex pairs to be $\{1,7\},\{5,8\},\{2,9\}$ and $\{4,6\}$, then the value returned by the algorithm is $\beta=7+7+5+3=22$.  Hence, $\mathcal{B}$ contains $\{20,22\}$.
\end{proof}
\begin{center}
\begin{figure}
\begin{pspicture}(-3,-0.5)(5,2.5)
\qdisk(0,2){\vs}
\qdisk(0,0){\vs}
\qdisk(1,1.5){\vs}
\qdisk(1,0.5){\vs}
\qdisk(2,1){\vs}
\qdisk(3,1){\vs}
\qdisk(3,0){\vs}
\qdisk(3,2){\vs}
\qdisk(4,1){\vs}
\psline[linewidth=\et]{-}(0,2)(1,1.5)
\psline[linewidth=\et]{-}(0,0)(1,0.5)
\psline[linewidth=\et]{-}(1,1.5)(2,1)
\psline[linewidth=\et]{-}(1,0.5)(2,1)
\psline[linewidth=\et]{-}(2,1)(3,1)
\psline[linewidth=\et]{-}(3,1)(3,2)
\psline[linewidth=\et]{-}(3,1)(3,0)
\psline[linewidth=\et]{-}(3,1)(4,1)
\uput[270](0,0){$5$}
\uput[90](0,2){$1$}
\uput[90](1,1.5){$2$}
\uput[270](1,0.5){$4$}
\uput[90](2,1){$3$}
\uput[315](3,1){$6$}
\uput[90](3,2){$7$}
\uput[0](4,1){$8$}
\uput[270](3,0){$9$}
\end{pspicture}
\caption{The transposition tree used in the proof of Theorem~\ref{thm:nonuniqueness:algA} and Theorem~\ref{thm:beta:lessthan:fT:is:possible}.} \label{fig:tree:T2}
\end{figure}
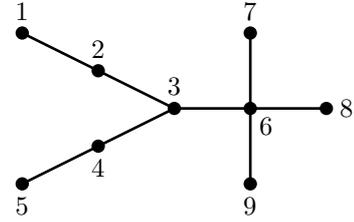
\end{center}

In addition to time complexity, another advantage of Algorithm A over the previous upper bound is that the value computed by Algorithm A can sometimes be \emph{strictly} less than the previous upper bound on the diameter.
\begin{Theorem} \label{thm:beta:lessthan:fT:is:possible}
The value computed by Algorithm A is always less than or equal to the previously known upper bound on the diameter, and there exist transposition trees for which the value computed by Algorithm A is strictly less than the previous upper bound.
\end{Theorem}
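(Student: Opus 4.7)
The plan is to separate the two claims. The first assertion, $\beta \le f(T)$ for every possible output $\beta$ of Algorithm A, is precisely Theorem~\ref{thm:beta:le:AKbound}, so I would simply invoke it directly and move on to the strictness half.

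For the strictness claim, the key observation is that we can leverage Theorem~\ref{thm:nonuniqueness:algA} essentially for free. That theorem exhibits a concrete transposition tree (the one in Figure~\ref{fig:tree:T2}) for which Algorithm A admits at least two distinct possible outputs, namely $\beta_1 = 20$ and $\beta_2 = 22$. The plan is therefore to take $T$ to be exactly this tree, and to argue as follows. Since Theorem~\ref{thm:beta:le:AKbound} applies to \emph{every} possible output of Algorithm A, it in particular yields $22 \le f(T)$. Consequently the smaller output satisfies
\[
\beta_1 \;=\; 20 \;<\; 22 \;\le\; f(T),
\]
which is the required strict inequality, produced by one legitimate run of the algorithm.

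The merit of this approach is that it requires no direct evaluation of $f(T)$, which would be painful since it is a maximum over $S_9$ for the nine-vertex tree in Figure~\ref{fig:tree:T2}. Instead, the non-uniqueness of $\beta$ automatically certifies strictness: whenever Algorithm A can output two different values on the same tree, the smaller of them must fall strictly below $f(T)$, because the larger is already dominated by $f(T)$ via Theorem~\ref{thm:beta:le:AKbound}. The only real step in the write-up is to explicitly cite the two vertex-pair sequences from the proof of Theorem~\ref{thm:nonuniqueness:algA} that yield the values $20$ and $22$, and then chain the inequalities as above.

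There is essentially no obstacle here beyond bookkeeping; the content of the theorem is a corollary of Theorems~\ref{thm:beta:le:AKbound} and~\ref{thm:nonuniqueness:algA} taken together. If one wanted a more self-contained witness (say, one that does not rely on non-uniqueness), the alternative plan would be to pick a specific small tree, run Algorithm A to obtain $\beta$, and independently evaluate $f(T)$ by a direct (computer-assisted) search over $S_n$; but this would be strictly more work and would not illuminate the structural reason for strictness, so I would prefer the non-uniqueness-based argument.
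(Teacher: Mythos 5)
Your proposal is correct, and the strictness half takes a genuinely different route from the paper. The paper proves strictness on the same nine-vertex tree of Figure~\ref{fig:tree:T2}, but it does so by reporting the computer-verified values $\diam(\Gamma)=18$ and $f(T)=22$ and then exhibiting the run of Algorithm A with output $\beta=20<22$. You instead avoid evaluating $f(T)$ altogether: since the proof of Theorem~\ref{thm:beta:le:AKbound} applies to \emph{every} sequence of vertex pairs chosen in step A2, the run from Theorem~\ref{thm:nonuniqueness:algA} that outputs $22$ already forces $f(T)\ge 22$, and hence the run that outputs $20$ is strictly below $f(T)$. This is a valid chain of inequalities, and it isolates a nice structural fact the paper does not state explicitly: non-uniqueness of the algorithm's output on a tree automatically certifies that its smallest output is strictly less than $f(T)$. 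What your argument buys is a fully self-contained, computation-free proof of the existence claim; what the paper's computer-assisted route buys is the exact value $f(T)=22$ (showing the $22$-run is tight against $f(T)$) and the true diameter $18$, which quantify the size of the improvement rather than merely its existence. Both readings interpret ``the value computed'' as the output of some legitimate run, which is also how the paper's own proof treats it.
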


\begin{proof}
The first part of the assertion has been proved earlier.  For the second part, consider the transposition tree shown in Figure~\ref{fig:tree:T2}. It can be confirmed with the help of a computer that the true diameter value of the Cayley graph generated by this tree $T$ is 18 and the previous upper bound $f(T)$ on the diameter evaluates to 22.  As mentioned above, if Algorithm A picks the sequence of vertex pairs during step A2 to be $\{1,5\},\{2,7\},\{4,8\}$ and $\{3,9\}$, then the value returned by the algorithm is $\beta=7+5+5+3=20$, which is strictly less than the previous upper bound.
\end{proof}

In Theorem~\ref{thm:strictness} (cf. also the recent work Ganesan \cite{Ganesan:ICMMSC:2012}), it was shown that the difference between the previous upper bound $f(T)$ and the actual diameter $\diam(\Gamma)$ is at least $n-4$.  The proof given there is quite involved and required an examination of several (over 25) subcases.

As another advantage of Algorithm A over the previous upper bound $f(T)$, we show that the value computed by Algorithm A can also have a difference of at least $n-4$ from the actual diameter value but that the proof of this result is much simpler than the corresponding result for the previous upper bound $f(T)$:

\begin{Proposition} \label{prop:strictness:AlgA} For every $n$, there exists a transposition tree on $n$ vertices, such that the difference between the value computed by Algorithm A and the actual diameter value of the Cayley graph is at least $n-4$.
\end{Proposition}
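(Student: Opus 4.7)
The plan is to reuse the transposition tree $T$ from the proof of Theorem~\ref{thm:strictness}, namely the path $1 - 2 - \cdots - (n-2)$ with two additional leaves $n-1$ and $n$ attached to vertex $n-2$. That proof already establishes $\diam(\Gamma) \le \binom{n-1}{2}+1$ via a successive-homing argument, so it only remains to exhibit a run of Algorithm A on $T$ whose output $\beta$ satisfies $\beta - \diam(\Gamma) \ge n-4$.

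First I would trace Algorithm A with a specific choice of pairs. Since $\diam(T)=n-2$ is attained by the pair $\{1,n-1\}$, let the first iteration of step A2 pick this pair; the residual tree $T-\{1,n-1\}$ is the path $2 - 3 - \cdots - (n-2) - n$ on $n-2$ vertices. From the second iteration onward the current tree is always a path whose only diametral pair consists of its two endpoints, so step A2 is in fact forced and each iteration trims two vertices. A straightforward tally then shows that iteration $1$ contributes $2(n-2)-1$ to $\beta$, iteration $k \ge 2$ contributes $2(n-2k+1)-1$, and at termination a correction $|V(T)|-1 \in \{0,1\}$ is added.

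Next I would sum these contributions, which is a routine arithmetic-progression computation, to obtain the closed form
\[
\beta \;=\; \binom{n-1}{2} + n - 3.
\]
Combined with $\diam(\Gamma) \le \binom{n-1}{2}+1$ from Theorem~\ref{thm:strictness}, this immediately yields $\beta - \diam(\Gamma) \ge n-4$, as desired.

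The only minor subtlety is that the parity of $n$ determines whether the algorithm terminates with one or two residual vertices (and hence whether the terminal $|V(T)|-1$ correction is $0$ or $1$), but both parities produce the same closed-form value of $\beta$. In sharp contrast to the proof of Theorem~\ref{thm:strictness}, which required an exhaustive case analysis over the cycle structures of permutations in $S_n$, the argument here works entirely on the $n$-vertex tree and reduces to a single summation, illustrating the simplicity advantage of Algorithm A highlighted in the remarks preceding this proposition.
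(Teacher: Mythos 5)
Your proposal is correct and follows essentially the same route as the paper: the same broom-shaped tree, the same tally of Algorithm A's contributions giving $\beta=\binom{n-1}{2}+n-3$, and the same successive-homing bound $\diam(\Gamma)\le\binom{n-1}{2}+1$ (which the paper re-derives in place rather than citing Theorem~\ref{thm:strictness}, a negligible difference). No gaps to report.
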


\begin{proof}
Consider the transposition tree $\{(1,2),(2,3),\ldots,(n-3,n-2),(n-2,n-1),(n-2,n)\}$ shown in Figure~\ref{fig:tree:strictness:algA}.  The diameter of this tree is $n-2$.  After Algorithm A picks and removes two vertices from this tree that are a distance $n-2$ apart, we obtain the path graph on $n-2$ vertices.  The unique value computed by Algorithm A is thus $\{2(n-2)-1\} + \{2(n-3)-1\} + \{2(n-5)-1\}+\cdots$, which equals $\{2(n-2)-1\} + {{n-2} \choose 2} = {{n-1} \choose 2} + n-3$.

Now, any permutation can be sorted on this tree using at most ${{n-1} \choose 2}+1$ edges.  Indeed, marker 1 can be homed to its vertex using at most $n-2$ edges, and this vertex can then be removed from the tree.  Marker 2 can then be homed using at most $n-3$ edges, and so on, and marker $n-4$ can be homed using at most 3 edges.  At this point, we arrive at a star $K_{1,3}$, and any permutation on this star can be sorted using at most 4 edges since the diameter of the Cayley graph generated by this star is equal to 4.  Thus, the diameter of the Cayley graph generated by this tree is at most $(n-2)+(n-3)+\ldots+3+4 = {{n-1} \choose 2} +1$.

Hence, for this transposition tree, the difference between the value computed by Algorithm A and the actual diameter value is at least $n-4$.
\end{proof}

\begin{center}
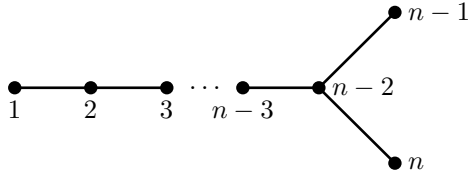
\begin{figure}
\begin{pspicture}(-3,0)(5,2)
\qdisk(0,1){\vs}
\qdisk(1,1){\vs}
\qdisk(2,1){\vs}
\qdisk(3,1){\vs}
\qdisk(4,1){\vs}
\qdisk(5,2){\vs}
\qdisk(5,0){\vs}
\psline[linewidth=\et]{-}(0,1)(1,1)
\psline[linewidth=\et]{-}(1,1)(2,1)
\uput[270](2.5,1.2){$\ldots$}
\psline[linewidth=\et]{-}(3,1)(4,1)
\psline[linewidth=\et]{-}(4,1)(5,2)
\psline[linewidth=\et]{-}(4,1)(5,0)

\uput[270](0,1){$1$}
\uput[270](1,1){$2$}
\uput[270](2,1){$3$}
\uput[270](3,1){$n-3$}
\uput[0](4,1){$n-2$}
\uput[0](5,2){$n-1$}
\uput[0](5,0){$n$}
\end{pspicture}
\caption{The transposition tree used in the proof of Proposition~\ref{prop:strictness:AlgA}} \label{fig:tree:strictness:algA}
\end{figure}
\end{center}

Note that Theorem~\ref{thm:betamax:ubound} implies that the value computed by Algorithm A is an upper bound on the diameter for all trees for which $|\mathcal{B}|=1$ since for such trees any value $\beta$ computed by the algorithm is equal to $\beta_{\max}$.  For such trees, Algorithm A efficiently computes a value which is both an upper bound on the diameter as well as better than (or at least as good as) the previously known diameter upper bound.   In the rare case when the value computed by Algorithm A is not unique, it still seems likely that that each of the possible values computed by the algorithm is an upper bound on the diameter, though we do not have a proof for this.

\section{Concluding remarks}

Cayley graphs have been studied as a suitable model for the topology of interconnection networks, and a problem of both theoretical and practical interest is to obtain bounds for the diameter of Cayley graphs.  In this work, we investigated an upper bound on the diameter of Cayley graphs generated by transposition trees.   We showed that for every $n >4$, there exists a tree on $n$ vertices such that the difference between the previous upper bound $f(T)$ and the true diameter is at least $n-4$.  Such results are of interest because they give us insight as to how far away these bounds can be from the true diameter value in the worst case and sometimes tell us for which families of graphs this bound can be utilized or not utilized.

The $n-4$ lower bound on $\Delta_n$ was seen to be best possible in the sense that it is attained for some values of $n$. Now consider the tree on 9 vertices consisting of the edge set $\{(1, 2), (2, 3),$ $(3, 4), (4, 5),$ $(5, 6), (6, 7),$ $(6, 8), (6, 9)\}$. Then, it can be confirmed with the help of a computer that the diameter of the Cayley graph generated by this tree is 24, and the diameter upper bound $f(T )$ for this tree evaluates to 30. Hence, this $n-4$ lower bound is not the exact value of the strictness $\Delta_n$, and an open problem is to obtain an upper bound for $\Delta_n$.

An efficient algorithm (Algorithm A) for the diameter of Cayley graphs generated by transposition trees was given. Remarkably, the algorithm has time complexity $O(n^2)$, compared to the previous upper bound in the literature $f(T)$ for which no polynomial time algorithm is known and for which the existing methods take time $\Omega(n!)$ to compute.   We proved  that the value obtained by our algorithm is less than or equal to the previously known diameter upper bound. Sometimes the value computed by the algorithm is {\em strictly} less than the previous upper bound.  Such an improvement in efficiency from time $\Omega(n!)$ to polynomial time $O(n^2)$, while still performing at least as well as the previous bound, was possible because we worked directly with the transposition tree on $n$ vertices, and so our algorithm does not require examining any of the permutations.

We described some further advantages of our algorithm over the previous bound, besides the improvement in time complexity.  We provided a tree for which the value computed by the  algorithm is not necessarily unique.  This is an important fact because such counterexamples are quite rare.

We believe that each of the possible values computed by Algorithm A is an upper bound on the diameter of the Cayley graph, but we do not have a proof for this.  For the families of trees investigated so far, the maximum possible value returned by our algorithm is exactly equal to the previously known diameter upper bound. However, our algorithm arrived at the same value using a very different (and also simpler and more efficient) method than the previously known diameter upper bound, and investigating further properties of this algorithm might lead to new insights on this problem.

The algorithm presented here raises many further interesting questions and problems.  For example, is it true that the algorithm returns a unique value (i.e. $|\mathcal{B}|=1$) for almost all trees? Is it true that for all trees, the maximum possible value $\beta_{\max}$ returned by the algorithm is equal to $f(T)$?  Other open problems include to characterize those trees for which the value returned by Algorithm A is unique, and to characterize those trees for which the sequences of subtrees generated by the algorithm are isomorphic (i.e. are independent of the choice of vertex pairs during step A2, unlike the tree in Example 1 above).

\ifCLASSOPTIONcaptionsoff
  \newpage
\fi



%




\bibliographystyle{BibTeXtran}   
\bibliography{refsaut}       

%


\begin{IAENGbiographynophoto}{Ashwin Ganesan}
was born in Kalpakkam, Tamil Nadu, India, in 1977. He received the B.S. degree in electrical engineering from Marquette University, Milwaukee, WI, USA, in 1998, and the M.S. degree in electrical engineering from the University of Wisconsin at Madison, WI, USA, in 2000.  His research interests are in combinatorics, graph theory, algorithms, discrete mathematics, and their applications.

He received the Top Scholar in Curriculum Award from Marquette University in 1998, the Frank Rogers Bacon Fellowship from the University of Wisconsin at Madison, WI, USA during 1998-1999, and the Regents' Fellowship from the University of California at Berkeley, CA, USA, during 2001-2002.  He was a Teaching and Research Assistant at the University of California at Berkeley during 2001-2004 and at the University of Wisconsin at Madison during 2004-2008.  He was a Senior Lecturer and Assistant Professor at Mumbai University affiliated engineering colleges during 2008-2010.  Since 2010, he has been an Assistant Professor in the Department of Mathematics, Amrita School of Engineering, Amrita University, Tamil Nadu, India.

\end{IAENGbiographynophoto}






\end{document}